\newtheorem{theorem}{Theorem}
\newtheorem{lemma}{Lemma}
\newtheorem{claim}{Claim}
\newenvironment{proof}[1][Proof]{\textbf{#1:}\\}{\hfill\rule{1.5mm}{1.5mm}\\}
\title{Speed Scaling on Parallel Processors with Migration}
\author{
Eric Angel$^1$, Evripidis Bampis$^2$, Fadi Kacem$^1$, Dimitrios Letsios$^1$ \\
\\
$^1$IBISC, Universit\'{e} d'\'{E}vry, France\\
\{eric.angel, fadi.kacem, dimitris.letsios\}@ibisc.univ-evry.fr\\
\\
$^2$LIP6, Universit\'{e} Pierre et Marie Curie, France\\
Evripidis.Bampis@lip6.fr\\
}
\begin{document}

\maketitle

\begin{abstract}
We study the problem of scheduling a set of jobs with release dates, deadlines and processing requirements (or works), on parallel
speed-scaled processors so as to minimize the total energy consumption.
 We consider that both preemption and migration of jobs are allowed.
An exact polynomial-time algorithm has been proposed for this
problem, which is based on the Ellipsoid algorithm.
Here, we formulate the problem as a convex program
and we propose a simpler polynomial-time combinatorial algorithm which is based on a reduction
to the maximum flow problem. Our algorithm runs in $O(nf(n)logP)$ time,
 where $n$ is the number of jobs, $P$ is the range of all possible
values of processors' speeds divided by the desired accuracy and $f(n)$ is the complexity of computing a maximum flow in a layered graph
with $O(n)$ vertices.
Independently, Albers et al. \cite{AAG11}
proposed an $O(n^2f(n))$-time algorithm  exploiting the
same relation with the maximum flow problem.
We  extend our algorithm to the  multiprocessor
speed scaling problem with migration
where the  objective is the minimization
of the  makespan under a budget of energy.
\end{abstract}

\section{Introduction}

Energy consumption is a major issue in our days. Great efforts  are devoted to the reduction of energy dissipation in computing environments ranging from small portable devices to large data centers. From an algorithmic point of view, new challenging optimization problems are studied, in which the energy consumption is taken into account  as a constraint or as the optimization goal itself
(for recent reviews see  \cite{A10, A11}).
This later approach has been adopted in the seminal paper of Yao et al. \cite{YDS95}, where a set of independent jobs with release dates and deadlines have to be scheduled on a single processor so that the total energy is minimized, under the so-called {\em speed-scaling} model where the processor may run at variable speeds. Under this model, if the speed of a processor is $s$ then the power consumption is  $s^{\alpha}$, where $\alpha >1$ is a constant, and the energy consumption is the power integrated over time.
\newline
\newline
\noindent
{\bf Single processor case.}
Yao et al. proposed in \cite{YDS95}, an optimal off-line algorithm, known as the YDS algorithm according to the initials of the authors, for the problem with  preemption, i.e. where the execution of a job may be interrupted and resumed later on. In the same work, they initiated the study of online algorithms for the problem, introducing the Average Rate (AVR) and the Optimal Available (OA) algorithms. Bansal et al. \cite{BKP07} proposed a new online algorithm, the BKP algorithm according to the authors' initials, which improves the competitive ratio of OA for large values of $\alpha$.
\newline
\newline
\noindent
{\bf Multiprocessor case.} There are two variants of the model: the first variant allows the preemption of the jobs but not their migration. We call this variant, the {\em non-migratory} variant. This means that a job may be interrupted and resumed later on, on the same processor, but it is not allowed to continue its execution on a different processor. In the second variant, the {\em migratory} variant, both the preemption and the migration of the jobs are allowed. In \cite{AMSa07}, Albers et al. considered the non-migratory problem of minimizing the total energy consumption given that the jobs have release dates and deadlines. For unit-work jobs, they proposed a polynomial time algorithm when the deadlines of jobs are {\em agreeable}. When the release dates and deadlines of jobs are arbitrary, they proved that the problem becomes NP-hard even for unit-size jobs and proposed approximation algorithms with constant approximation ratios for the off-line version of the problem. A generic reduction is given by Greiner et al. (see \cite{GNS09}) transforming a $\beta$-approximation algorithm for the single-processor problem to a  $\beta B_{\alpha}$-approximation algorithm for the multi-processor non-migratory problem, where $B_{\alpha}$ is the $a$-th Bell number. Also, they showed that a $\beta$-approximation for multiple processors with migration yields a deterministic $\beta B_{\alpha}$-approximation algorithm for multiple processors without migration.

For the migratory variant, Chen et al., in \cite{CHCYPK04}, were the first to study the speed scaling problem of minimizing the energy consumption on $m$ processors with migration. In fact, they proposed a simple algorithm for the case where jobs have common release dates and deadlines. In \cite{BG08}, Bingham and Greenstreet proposed a polynomial-time algorithm for the general problem where each job has an arbitrary work, a release date and a
deadline, and the power function is any convex function. Their algorithm is based on the use of the
Ellipsoid method (see \cite{NNN94}).
Since the Ellipsoid algorithm is not used in practice,
it was an open problem to define a faster combinatorial
algorithm.
When preparing the current version of this paper, it came to our
knowledge that Albers et al. \cite{AAG11} considered the same
problem and presented an optimal $O(n^2f(n))$-time
combinatorial algorithm, where $n$ is the number of jobs
and $f(n)$ the complexity of finding a maximum
flow in a layered graph with $O(n)$ vertices. Notice that in \cite{AAG11},
nothing is mentioned about the exact complexity
of the algorithm, except of course of
its clear polynomiality. They also  extended  the analysis of the single
processor OA and AVR online algorithms
 to the multiprocessor case
with migration.
\newline
\newline
\noindent
{\bf Multicriteria minimization.}
In general, minimizing the energy consumption is in conflict with
the increase of the performance of many computing devices. Hence,
a series of papers adresses this problem in a multicriteria
context.
In \cite{PUW08}, Pruhs et al. were the first to study the problem of optimizing a time-related objective function with a budget of energy. Their objective was to minimize the sum of flow times and they presented a polynomial time algorithm for the case of unit-work jobs. To prove that their algorithm is optimal, they formulated the problem as a convex program and they applied the well-known
Karush-Kuhn-Tucker (KKT) conditions to get necessary conditions for optimality. In \cite{AF07},
Albers and  Fujiwara studied the problem of minimizing the sum of flow times plus energy instead of having an energy budget, which gives rise
to an alternative way of combining  the optimization of two conflicting
criteria. For unit-work jobs, they proposed online algorithms and an exact polynomial-time algorithm. In \cite{CCLLMW07},  Chan et al. proposed an online algorithm to minimize the energy consumption and among the schedules with the minimum energy they tried to find the one with the maximum throughput. Assuming that there is an upper bound on the processor's speed, they established constant-factor competitiveness both in terms of energy and throughput.
\newline
\newline
\noindent
{\bf Our contribution and organization of the paper.}
We consider the multiprocessor migratory scheduling problem
with the objective of minimizing the energy consumption.
In Section 3, we give the first convex programming
formulation of the problem and in Section 4, we apply,
for the first time, the well known KKT conditions.
In this way, we obtain a set of properties that
need to be satisfied by any optimal schedule.
Then in Section 5, we propose an optimal algorithm in the case where the jobs have release dates, deadlines and the power function is of the form $s^{\alpha}$. The time
complexity of our algorithm, which we call BAL, is in $O(nf(n)\log P)$, where $n$ is the number of jobs, $P$ is the range of all possible
values of processors' speed divided by the desired accuracy and $f(n)$ is the complexity of computing a maximum flow in a layered graph with $O(n)$ vertices.
We also give a brief description of the relation of our algorithm
and the one of Albers et al. \cite{AAG11}, as well as the analysis of their
algorithm's complexity.
Finally in Section 6, we extend BAL to obtain an optimal algorithm for the problem of makespan minimization
with a budget of energy.

\section{Preliminaries}

Let $\mathcal{J}=\{j_1,...,j_n\}$ be a set of jobs. Each job $j_i$ is specified by a work $w_i$, a release date $r_i$ and a deadline $d_i$. We define $span_i=[r_i,d_i]$ and we say that $j_i$ is \textbf{alive} at time $t$ if $t\in span_i$. We also define the density of job $j_i$ as $den_i=w_i/(d_i-r_i)$. We assume a set of $m$ variable-speed homogeneous processors in the sense that they can all, dynamically, change their speeds and have a common speed-to-power function $P(t)=s(t)^\alpha$ where $P(t)$ is the power consumption at time $t$, $s(t)$ is the speed (or frequency) at time $t$ and $\alpha>1$ is a constant. Consider any interval of time $[a,b]$ and a given processor. The amount of work processed by this processor and its energy consumption during $[a,b]$ are $\int_a^{b}s(t)dt$ and $\int_a^{b}s(t)^\alpha dt$, respectively. Hence, if a job is continuously run at a constant speed $s$ during an interval of length $\ell$, then $w=s\cdot\ell$ units of work are completed and an amount of $E=s^\alpha\cdot\ell$ units of energy are consumed. In our setting, preemption and migration of jobs are allowed. That is, the processing of a job may be suspended and resumed later on the same processor or on a different one. Nevertheless, we do not allow parallel execution of a job which means that a job cannot be run simultaneously on two or more processors. We also assume that a continuous spectrum of speeds is available and that there is no upper bound on the speed of any processor. Our objective is to find a feasible schedule that minimizes the total energy consumed by all processors.

We define $\mathcal T = \{t_0 ,\cdots\, t_L\}$ to be the set of release dates and deadlines taken in a non-decreasing order and without duplication. It is clear that $t_0=\min_{j_i\in\mathcal{J}}\{r_i\}$ and $t_L=\max_{j_i\in\mathcal{J}}\{d_i\}$. Let $I_j=[t_{j-1},t_j]$, for $1\leq j\leq L$, and $\mathcal{I}=\{I_1,\cdots, I_L\}$. We denote $|I_j|$ the length of the interval $I_j$. Also, let $A(j)$ be the set of jobs that are alive during $I_j$, i.e. all the jobs $j_i$ with $I_j\subseteq span_i$, and $a_j=|A(j)|$ be the number of jobs in $A(j)$. Given any schedule $\mathcal{S}$, we denote $t_{i,j}$ the total units of time that job $j_i$ is processed during the interval $I_j$ by $\mathcal{S}$. As already mentioned in many other works (see \cite{YDS95} for example), one can show, through a simple exchange argument, that there always exists an optimal schedule in which every job $j_i$ is run at a constant speed $s_i$ and this comes from the convexity of the power function.

Next, we state a problem which is a variation of our problem that we will need throughout our analysis, we call it the {\em Work Assignment Problem} (or WAP) and can be described as follows: Consider a set of $n$ jobs $\mathcal{J}=\{j_1,j_2,\cdots,j_n\}$ and a set of intervals $\mathcal{I}=\{I_1,I_2,\cdots,I_L\}$. Each job can be alive in one or more intervals in $\mathcal{I}$. During each interval $I_j$ there are $m_j$ available processors. Moreover, we are given a value $v$. Our objective is to find whether or not there is a feasible schedule that executes all jobs in $\mathcal{J}$ with constant speed $v$. Recall that a schedule is feasible if and only if each job is executed during its alive intervals and is executed by at most one processor at each time $t$. Preemption and migration of jobs are allowed.
Note that the WAP is almost the $P|r_i,d_i,pmtn|-$ (see \cite{BNS04}) with the difference that, in WAP, not all intervals have the same number of available processors. Therefore, WAP is polynomially solvable by applying a variant of an
algorithm for $P|r_i,d_i,pmtn|-$.

\section{Convex Programming Formulation}
Our problem can be formulated as the following convex program:
\begin{eqnarray}
\min \sum_{j_i\in \mathcal{J}} w_is_i^{\alpha-1}\\ \nonumber \\
\frac{w_i}{s_i} - \sum_{I_j:\;j_i\in A(j)}t_{i,j} \leq 0  & \hspace{2cm} j_i\in \mathcal {J}\\
\sum_{j_i\in A(j)} t_{i,j} - m\cdot|I_j| \leq 0 & \hspace{2cm} 1\leq j\leq L\\
\sum_{j_i\in A(j)} t_{i,j} - a_j\cdot|I_j| \leq 0 & \hspace{2cm} 1\leq j\leq L\\
t_{i,j} - |I_j|\leq 0 & \hspace{2cm} 1\leq j\leq L, \; j_i\in A(j) \\
- t_{i,j}\leq 0 & \hspace{2cm} 1\leq j\leq L, \; j_i\in A(j)\\	
- s_i \leq 0 & \hspace{2cm} j_i\in \mathcal{J}
\end{eqnarray}
Note that the total running time and the total energy consumption of each job $j_i$ is $\frac{w_i}{s_i}$ and $w_is_i^{a-1}$, respectively. Then, the term (1) is the total energy consumed by all jobs which is our objective function and the constraints (2) enforce that $w_i$ amount of work must be executed for each job $j_i$. The constraints (3) enforce that we can use at most $m$ processors for $|I_j|$ units of time during any interval $I_j$. Also, we can use at most $a_j$ processors operating for $|I_j|$ units of time during any interval $I_j$, otherwise we would have parallel execution of a job and this is expressed by (4). The constraints (5) prevent any job $j_i$ from being executed for more than $|I_j|$ units of time during any interval $I_j\subseteq span_i$. Note that constraints (4) and (5) are both needed and none is covered by the other. The constraints (6) and (7) insure the positivity of the variables $t_{i,j}$ and $s_i$, respectively.

The above mathematical program is indeed convex because, as mentioned by other works (e.g. \cite{PUW08}), the objective function and the first constraint are convex while all the other constraints are linear. Since our problem can be written as a convex program, it can be solved in polynomial time by applying the Ellipsoid Algorithm \cite{NNN94}. Nevertheless, the Ellipsoid Algorithm is not used in practice and we would like to construct a faster and less complicated combinatorial algorithm.

At this point, notice that once the speeds of the jobs are computed, by solving the convex program, a further step is needed in order to construct a feasible schedule. This is exactly the feasibility problem $P|r_i,d_i,pmtn|-$.

\section{KKT Conditions}
We apply  the KKT conditions to the above convex program
to obtain necessary conditions for optimality of a feasible
schedule. We also show that these conditions are sufficient for optimality.

Assume that we are given the following convex program:

\begin{eqnarray*}
\min f(x) \\
g_i(x)\leq0 & \hspace{2cm} 1\leq i\leq m \\
x\in \mathbf{R}^n
\end{eqnarray*}

Suppose that the program is strictly feasible, i.e. there is a point $x$ such that $g_i(x)<0$ for all $1\leq i\leq m$, and all functions $g_i$ are differentiable. Let $\lambda_i$ be the dual variable associated with the constraint $g_i(x)\leq0$. The Karush-Kuhn-Tucker (KKT) conditions are:

\begin{eqnarray*}
g_i(x)\leq0 & \hspace{2cm} 1\leq i\leq m \\
\lambda_i\geq0 & \hspace{2cm} 1\leq i\leq m \\
\lambda_ig_i(x)=0 & \hspace{2cm} 1\leq i\leq m \\
\nabla f(x)+\sum_{i=1}^m\lambda_i\nabla g_i(x) =0
\end{eqnarray*}

KKT conditions are necessary and sufficient for solutions $x\in \mathbf{R}^n$ and $\lambda\in \mathbf{R}^m$ to be primal and dual optimal. We refer to the above conditions as primal feasible, dual feasible, complementary slackness and stationarity conditions, respectively.

The following lemma is a direct consequence of the KKT conditions for the convex program of our problem.

\begin{lemma}\label{KKT-lem}
A feasible schedule for our problem is optimal if and only if it satisfies the following properties:
\begin{enumerate}
  \item Each job $j_i$ is executed at a constant speed $s_i$.
  \item If a job $j_i$ is not executed during an interval $I_j\subset span_i$, i.e. $t_{i,j}=0$, then $s_i\leq s_k$ for every job $j_k$ with $I_j\subseteq span_k$ and $t_{k,j}>0$.
  \item If a job $j_i$ has $t_{i,j}=|I_j|$ for an interval $I_j$, then $s_i\geq s_k$ for any job $j_k$ alive during $I_j$ with $t_{k,j}<|I_j|$.
  \item All jobs $j_i$ that are alive during $I_j$ with $0<t_{i,j}<|I_j|$ have equal speeds.
  \item If $a_j \leq m$ during an interval $I_j$, then $t_{i,j}=|I_j|$, for every $j_i$ with $I_j \subseteq span_i$.
\end{enumerate}
\end{lemma}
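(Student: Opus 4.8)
The plan is to derive the five properties directly from the KKT conditions stated above. First I would set up notation for the dual variables: let $\lambda_i$ be associated with constraint (2) (the work constraint of job $j_i$), let $\mu_j$, $\nu_j$ be associated with constraints (3) and (4) respectively (the $m$-processor and $a_j$-processor capacity constraints for interval $I_j$), let $\beta_{i,j}$ be associated with the upper-bound constraints (5) ($t_{i,j}\le |I_j|$), let $\gamma_{i,j}$ be associated with the nonnegativity constraints (6) ($-t_{i,j}\le 0$), and let $\delta_i$ be associated with (7) ($-s_i\le 0$). I would first check strict feasibility (Slater's condition): since $\alpha>1$, one can pick all speeds large enough and all $t_{i,j}$ strictly between $0$ and $|I_j|$ so that every inequality is strict; this justifies that KKT is both necessary and sufficient here. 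Property 1 is already justified in the Preliminaries by the exchange/convexity argument, so I would just restate it (it is implicit in the formulation, since the program only has one speed variable $s_i$ per job).

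Next I would write out the stationarity conditions separately with respect to each $t_{i,j}$ and each $s_i$. Differentiating the Lagrangian with respect to $t_{i,j}$ gives an equation of the form $-\lambda_i + \mu_j + \nu_j + \beta_{i,j} - \gamma_{i,j} = 0$, i.e. $\lambda_i = \mu_j + \nu_j + \beta_{i,j} - \gamma_{i,j}$. Differentiating with respect to $s_i$ gives $(\alpha-1)w_i s_i^{\alpha-2} - \lambda_i w_i/s_i^2 - \delta_i = 0$; since $s_i>0$ in any reasonable schedule we get $\delta_i=0$ and hence $\lambda_i = (\alpha-1)s_i^{\alpha}$, which is a strictly increasing function of $s_i$. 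This is the crucial observation: $\lambda_i$ is an increasing bijective function of the speed $s_i$, so comparing $\lambda_i$ and $\lambda_k$ is the same as comparing $s_i$ and $s_k$. Then each of Properties 2--4 follows by a short case analysis on complementary slackness: if $t_{i,j}=0$ then $\beta_{i,j}=0$ (since $t_{i,j}<|I_j|$, assuming $|I_j|>0$) and $\gamma_{i,j}\ge 0$, so $\lambda_i = \mu_j+\nu_j-\gamma_{i,j}\le \mu_j+\nu_j$; whereas any $j_k$ with $t_{k,j}>0$ has $\gamma_{k,j}=0$ and $\beta_{k,j}\ge 0$, so $\lambda_k = \mu_j+\nu_j+\beta_{k,j}\ge \mu_j+\nu_j \ge \lambda_i$, giving $s_k\ge s_i$ — this is Property 2. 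Symmetrically, if $t_{i,j}=|I_j|$ then $\gamma_{i,j}=0$ so $\lambda_i=\mu_j+\nu_j+\beta_{i,j}\ge\mu_j+\nu_j$, while any $j_k$ with $t_{k,j}<|I_j|$ has $\beta_{k,j}=0$ so $\lambda_k=\mu_j+\nu_j-\gamma_{k,j}\le\mu_j+\nu_j\le\lambda_i$, giving Property 3. For jobs with $0<t_{i,j}<|I_j|$, both $\beta_{i,j}=0$ and $\gamma_{i,j}=0$, so $\lambda_i=\mu_j+\nu_j$ — a quantity depending only on $j$, not $i$ — and hence all such jobs share the same $\lambda$-value and therefore the same speed, which is Property 4.

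For Property 5 I would argue that when $a_j\le m$, the $m$-processor constraint (3) is implied by the $a_j$-processor constraint (4) together with the upper bounds (5): indeed $\sum_{j_i\in A(j)}t_{i,j}\le a_j|I_j|\le m|I_j|$ automatically, so we may take $\mu_j=0$ without loss. Then I'd need to show $t_{i,j}=|I_j|$ for every alive job. Suppose not: some alive $j_i$ has $t_{i,j}<|I_j|$, hence $\beta_{i,j}=0$ and $\lambda_i=\nu_j-\gamma_{i,j}\le\nu_j$. If $\nu_j>0$ then by complementary slackness $\sum_{j_i\in A(j)}t_{i,j}=a_j|I_j|$, forcing every alive job to have $t_{i,j}=|I_j|$, a contradiction; if $\nu_j=0$ then $\lambda_i\le 0$, i.e. $s_i=0$, which contradicts the job having positive work to do (one can also invoke the earlier-established fact that an optimal schedule runs each job at a constant strictly positive speed, or handle the degenerate $w_i=0$ case trivially). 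This completes Property 5.

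The main obstacle I anticipate is the careful bookkeeping in the two directions of the ``if and only if'': necessity is the KKT derivation above, but for sufficiency I must argue that any feasible schedule satisfying Properties 1--5 admits a valid choice of dual multipliers making all KKT conditions hold, and — because the program is convex and Slater holds — such a schedule is then globally optimal. Concretely, given a schedule with these properties I would define, interval by interval, $\mu_j$ and $\nu_j$ from the common speed of the ``fractional'' jobs (those with $0<t_{i,j}<|I_j|$), and if there are no fractional jobs in $I_j$ pick $\mu_j+\nu_j$ to lie between the largest speed among jobs with $t_{i,j}=0$ and the smallest speed among jobs with $t_{i,j}=|I_j|$, which Properties 2 and 3 guarantee is a nonempty range; then set $\beta_{i,j},\gamma_{i,j}$ to absorb the residuals (they come out nonnegative exactly because of Properties 2 and 3), set $\lambda_i=(\alpha-1)s_i^{\alpha}$ and $\delta_i=0$. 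Verifying consistency of $\lambda_i=\mu_j+\nu_j+\beta_{i,j}-\gamma_{i,j}$ across all intervals $I_j\subseteq span_i$ simultaneously is the delicate point, and this is exactly where Property 4 (all fractional jobs in an interval share a speed) and the global structure enforced by Properties 2, 3, 5 are needed; I expect this to require a short argument that the per-interval constraints on $\mu_j+\nu_j$ are mutually compatible with the fixed values $\lambda_i=(\alpha-1)s_i^\alpha$.
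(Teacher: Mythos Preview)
Your necessity argument (optimal $\Rightarrow$ Properties 1--5) is essentially the paper's: both introduce duals for constraints (2)--(7), write stationarity in $s_i$ and in $t_{i,j}$, obtain the key relation $\lambda_i=(\alpha-1)s_i^{\alpha}$ (the paper calls this $\beta_i$), and then read off Properties 2--4 from $\lambda_i=\mu_j+\nu_j+\beta_{i,j}-\gamma_{i,j}$ by complementary slackness on $\beta_{i,j},\gamma_{i,j}$. The paper organises the case analysis by $a_j$ versus $m$ and you organise it by the value of $t_{i,j}$, but the content is the same. Your derivation of Property~5 purely from KKT is in fact more careful than the paper's, which just invokes convexity.

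For sufficiency, however, you take a genuinely different route from the paper, and your sketch has a gap that is \emph{not} the one you flag. The paper does not reconstruct duals at all. Instead it argues by contradiction: given a schedule $A$ satisfying the properties and an optimal schedule $B$ with $E^A>E^B$, it sets $S=\{j_i:s_i^A>s_i^B\}$, finds an interval $I_p$ and jobs $j_k\in S$, $j_\ell\notin S$ with $t_{k,p}^A<|I_p|$, $t_{\ell,p}^A>0$, $t_{k,p}^B>0$, $t_{\ell,p}^B<|I_p|$, and then Properties 2--3 applied to both $A$ and $B$ yield $s_\ell^A\ge s_k^A>s_k^B\ge s_\ell^B$, contradicting $j_\ell\notin S$. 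No dual variables appear in this direction.

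In your dual-reconstruction approach, the ``consistency of $\lambda_i$ across intervals'' that you call the delicate point is actually harmless: because $\beta_{i,j}$ and $\gamma_{i,j}$ are chosen independently for each pair $(i,j)$, any value of $\mu_j+\nu_j$ lying between $\max\{\lambda_i:t_{i,j}=0\}$ and $\min\{\lambda_i:t_{i,j}=|I_j|\}$ (a nonempty range by Properties 2--3, and pinned down exactly by Property~4 when fractional jobs exist) gives nonnegative residuals. The real obstacle, which you do not mention, is \emph{complementary slackness for $\mu_j$ and $\nu_j$ themselves}. When $a_j>m$, constraint (4) is strict in any feasible schedule, so $\nu_j=0$ is forced and hence $\mu_j$ must equal the value you picked for $\mu_j+\nu_j$; but you then need $\mu_j>0$ only if $\sum_i t_{i,j}=m|I_j|$. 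Nothing in Properties 1--5 \emph{as you have used them} rules out an interval with $a_j>m$ in which fewer than $m$ jobs occupy the full $|I_j|$ and the remaining alive jobs have $t_{i,j}=0$: Properties 2--3 only compare speeds, Property~4 is vacuous, and Property~5 does not apply. In such an interval your construction would set $\mu_j>0$ while constraint~(3) is slack, breaking KKT. The paper's exchange argument sidesteps this issue entirely by never invoking dual feasibility, which is what makes it the cleaner route here.
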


\begin{proof}
In order to apply the KKT conditions, we need to associate with each constraint a dual variable. Therefore, to each set of constraints from (2) up to (7) we associate the dual variables $\beta_i$, $\gamma_j$, $\delta_j$, $\epsilon_{i,j}$, $\zeta_{i,j}$ and $\eta_{i}$, respectively.

By stationarity conditions, we have that
\begin{eqnarray*}
\nabla\sum_{j_i\in\mathcal{J}} w_is_i^{\alpha-1} + \sum_{j_i\in\mathcal{J}}\beta_i\cdot\nabla\bigg(\frac{w_i}{s_i} - \sum_{I_j:\;j_i\in A(j)}t_{i,j}\bigg)\\
+ \sum_{j=1}^L \gamma_j \nabla\bigg(\sum_{j_i\in A(j)} t_{i,j}-m\cdot|I_j|\bigg) + \sum_{j=1}^L \delta_j \nabla\bigg(\sum_{j_i\in A(j)} t_{i,j} - a_j\cdot|I_j|\bigg)  \\
+ \sum_{j=1}^L \sum_{j_i\in A(j)} \epsilon_{ij} \nabla(t_{i,j} - |I_j|) + \sum_{j=1}^L \sum_{j_i\in A(j)} \zeta_{ij} \nabla(-t_{i,j}) + \sum_{j_i\in \mathcal{J}} \eta_{i} \nabla(-s_i) &=& 0
\end{eqnarray*}
The above equation can be rewritten equivalently as
\begin{eqnarray}
\sum_{j=1}^L\sum_{j_i\in A(j)}\bigg(-\beta_i+\gamma_j+\delta_j+\epsilon_{i,j}-\zeta_{i,j}\bigg)\nabla t_{i,j} \nonumber\\ +\sum_{j_i\in\mathcal{J}}\bigg(\bigg((\alpha-1)w_is_i^{\alpha-2}-\frac{\beta_iw_i}{s_i^2}\bigg)\sum_{I_j:\;j_i\in A(j)} t_{i,j}-\eta_i\bigg)\nabla s_i &=& 0
\end{eqnarray}
Furthermore, complementary slackness conditions imply that
\begin{eqnarray}
     \beta_i\cdot\bigg(\frac{w_i}{s_i} - \sum_{I_j:\;j_i\in A(j)}t_{i,j}\bigg)=0  &\hspace{2cm} j_i\in \mathcal{J}\\
     \gamma_j\cdot\bigg(\sum_{j_i\in A(j)} t_{i,j}-m\cdot|I_j|\bigg)= 0 &\hspace{2cm} 1\leq j\leq L\\
     \delta_j\cdot\bigg(\sum_{j_i\in A(j)} t_{i,j} - a_j\cdot|I_j|\bigg)= 0 &\hspace{2cm} 1\leq j\leq L\\
     \epsilon_{ij}\cdot(t_{i,j} - |I_j|)= 0 &\hspace{2cm} 1\leq j\leq L, \; j_i\in A(j) \\
     \zeta_{ij}\cdot(-t_{i,j})= 0 &\hspace{2cm} 1\leq j\leq L, \; j_i\in A(j) \\	
     \eta_i\cdot(-s_i) = 0 &\hspace{2cm} j_i\in\mathcal{J}
\end{eqnarray}
We can safely assume that there are no jobs with zero work because we may
treat such jobs as if they did not exist. So, for any job $j_i$ it holds that $s_i>0$ and $\sum_{I_j\subseteq span_i}t_{i,j}>0$. Then, (14) implies that $\eta_i=0$. We set the coefficients of the partial derivatives $\nabla s_i$ and $\nabla t_{i,j}$ equal to zero so as to satisfy the stationarity conditions.
Thus, (8) gives that $\beta_i=(\alpha-1)s_i^\alpha$ for each job $j_i\in\mathcal{J}$ and
\begin{equation} (\alpha-1)s_i^\alpha=\gamma_j+\delta_j+\epsilon_{i,j}-\zeta_{i,j} \end{equation}
for each $j_i\in\mathcal{J}$ and $I_j\subseteq span_i$.
Now, for each interval $I_j$ we have the following cases:
\newline

\textbf{Case 1:} $a_j>m$ \newline
In this case, it is obvious that all processors operate during the whole interval in any optimal schedule. Because of (11), $\delta_j=0$.
We consider the following subcases on the execution time of any job $j_i\in A(j)$:
\begin{enumerate}
  \item Subcase A: $0<t_{i,j}<|I_j|$ \\
    Stationarity conditions (12), (13) imply that $\epsilon_{i,j}=\zeta_{i,j}=0$. As a result, (15) can be written as
    \begin{equation} (\alpha-1)s_i^\alpha=\gamma_j\nonumber\end{equation}
    The variable $\gamma_j$ is specific for each interval and as a result, all jobs of this subcase  have the same speed throughout the whole schedule. We denote this speed $v_j$ for each interval $I_j$.
  \item Subcase B: $t_{i,j}=|I_j|$ \\
    In this case, by (13) and (15), we get that
    \begin{equation} (\alpha-1)s_i^\alpha=\gamma_j+\epsilon_{i,j}\end{equation}
    Hence, all jobs of this kind have $s_i\geq v_j$.
  \item Subcase C: $t_{i,j}=0$ \\
    Which means, by (12) and (15) that
    \begin{equation} (\alpha-1)s_i^\alpha=\gamma_j-\zeta_{i,j}\end{equation}
    and thus, $s_i\leq v_j$.
\end{enumerate}

\textbf{Case 2:} $a_j<m$
\newline
In this case, each job in $A(j)$ is executed throughout the whole interval $I_j$, in every optimal schedule. This argument comes from the convexity of speed to power function. Therefore, each job $j_i\in A(j)$ has $\zeta_{i,j}=0$. Moreover since fewer than $m$ processors are used we have that $\gamma_j=0$. That is, for each $j_i\in I_j$ we have $(\alpha-1)s_i^\alpha=\delta_j+\epsilon_{i,j}$. By this set of equations, we cannot establish any strong relation between the speed of the  jobs that are alive during an interval $I_j$.
\newline

\textbf{Case 3:} $a_j=m$
\newline
This case can be handled exactly as the previous one with the difference that $\gamma_j\geq0$ and thus, we get that $(\alpha-1)s_i^\alpha=\gamma_j+\delta_j+\epsilon_{i,j}$.
\newline

Given a solution of the convex program that satisfies the KKT conditions, we derived some relations between the primal variables. Based on them, we defined some structural properties of any optimal schedule. These properties are necessary for optimality and we show that they are also sufficient because any schedule that satisfies these properties is optimal.

Assume for the sake of contradiction that there is a schedule $A$, that satisfies the properties of lemma 1, which is not optimal and let $B$ be an optimal schedule. We denote $E^X$, $s_i^X$ and $t_{i,j}^X$ the energy consumption, the speed of job $j_i$ and the total execution time of job $j_i$ during the interval $I_j$ in schedule $X$, respectively. Then, $E^A>E^B$. Let $S$ be the set of jobs $j_i$ with $s_i^A>s_i^B$. Clearly, there is at least one job $j_k$ such that $s_k^A>s_k^B$, otherwise $A$ would not consume more energy than $B$. So, $S\neq\emptyset$. By definition of $S$,
\begin{displaymath}\sum_{j_i\in S}\sum_{I_j:j_i\in A(j)}t_{i,j}^A<\sum_{j_i\in S}\sum_{I_j:j_i\in A(j)}t_{i,j}^B.\end{displaymath}
Hence, there is at least one interval $I_p$ such that
\begin{displaymath}\sum_{j_i\in S}t_{i,p}^A<\sum_{j_i\in S}t_{i,p}^B.\end{displaymath}
This gives that $t_{k,p}^A<t_{k,p}^B$ for some job $j_k\in S$. Thus, $t_{k,p}^A<|I_p|$ and $t_{k,p}^B>0$. If we consider any interval $I_j$, the sum of processing times of all jobs in $I_j$ is the same for all schedules satisfying lemma 1. So, there must be a job $j_{\ell}\notin S$ such that $t_{\ell,p}^A>t_{\ell,p}^B$. Therefore, $t_{\ell,p}^A>0$ and $t_{\ell,p}^B<|I_p|$. We conclude that $s_{\ell}^A\geq s_k^A>s_k^B\geq s_{\ell}^B$, which contradicts the fact that $j_{\ell}\notin S$.
\end{proof}
\newline

Notice that Lemma \ref{KKT-lem} does not explain how to find
an optimal schedule. The basic reason is that it does not determine
the speed value of each job. Moreover,
it does not specify exactly the structure of the optimal schedule.
That is, it does not specify which job is executed by each processor at each time $t$.

\section{An Optimal Combinatorial Algorithm}
In this section, we propose a combinatorial algorithm for our problem which always constructs a schedule satisfying the properties stated in the previous section. Our algorithm is based on the notion of {\em critical jobs} defined below.
The basic idea is to continuously decrease the speeds of jobs step by step. At each step, we assign a speed to the critical jobs that we
ignore in the subsequent steps and
we continue with the remaining subset of jobs.
At the end of the last step, every job has been assigned a speed.
In order to recognize the critical jobs, we consider a reduction
to the {\em Work Assignment Problem} (WAP).

Let us first give some notations and definitions concerning the maximum flow and minimum cut problems.
Consider a graph $G=(V,E)$  in which each edge $(u,v)$ has capacity $c(u,v)$ and two nodes $s,t\in V$. An $(s,t)$-cut of $G$ is a partition of its nodes into two disjoint subsets $X$ and $Y$ so that if we remove the edges $(u,w)$ with $u\in X$ and $w\in Y$, the nodes $s$ and $t$ are disconnected, i.e. there is no path from $s$ to $t$. A minimum $(s,t)$-cut $(X,Y)$ is a cut whose sum of the capacities of the edges $(u,w)$ with $u\in X$ and $w\in Y$ is minimized. In the following, we will consider an $(s,t)$-cut as the set of these edges. Also, given an $(s,t)$-flow of a graph $G=(V,E)$, we use the term $f(e)$ to denote the amount of flow that passes through the edge $e\in E$.

Given a graph $G$ and a flow $\mathcal{F}$, we define the residual graph $G_f$ of $G$ with respect to $\mathcal{F}$ as follows: (i) $G_f$ has the same set of nodes with $G$, (ii) for each edge $(u,v)$ in $G$ on which $f(u,v)<c(u,v)$, we include the edge $(u,v)$ with capacity $c(u,v)-f(u,v)$, and (iii) for each edge $(u,v)$ with $f(u,v)>0$, we include the edge $(v,u)$ with capacity $f(u,v)$. Next, we define the notion of upstream nodes that we will need throughout our analysis. A node $v$ is {\em upstream} if, for all min $(s,t)$-cuts $(X,Y)$, $v$ belongs in $X$. That is, $v$ lies on the source side of every min cut.

Now, for each instance of the WAP, we define a graph so as to reduce our original problem to the maximum flow problem.
Given an instance $<\mathcal{J},\mathcal{I},v>$ of the WAP, consider the graph $G=(V,E)$ that contains one node $x_i$ for each job $j_i$, one node $y_j$ for each interval $I_j$, a source node $s$ and a destination node $t$. We introduce an edge $(s,x_i)$ for each $j_i\in \mathcal{J}$ with capacity $\frac{w_i}{v}$, an edge $(x_i,y_j)$ with capacity $|I_j|$ for each
couple of $j_i$ and $I_j$ such that $j_i\in A(j)$ and an edge $(y_j,t)$ with capacity $m_j|I_j|$ for each interval $I_j\in\mathcal{I}$. We say that this is the corresponding graph of $<\mathcal{J},\mathcal{I},v>$.

At this point, we are ready to introduce the notion of criticality.
Given a feasible instance for the WAP, we say that job $j_i$ is {\em critical} if and only if for any feasible schedule and for each $I_j\subseteq span_i$, either $t_{i,j}=|I_j|$ or $\sum_{j_i\in A(j)}t_{i,j}=m_j|I_j|$. Furthermore, we say that an instance $<\mathcal{J},\mathcal{I},v>$ of the WAP is critical if and only if $v$ is the minimum speed so that the set of jobs $\mathcal{J}$ can be feasibly executed over the intervals in $\mathcal{I}$. With respect to graph $G$, a job $j_i$ is critical if and only if for any maximum flow, either the edge $(x_i,y_j)$ or the edge $(y_j,t)$ is saturated for each  $I_j$ such that $j_i\in A(j)$.
Notice that job  $j_i$ is also
critical for the $<\mathcal{J},\mathcal{I},v-\epsilon>$, for any $\epsilon >0$.
 
\newpage

\subsection{Properties of the Work Assignment Problem}

Next, we will prove some lemmas that will guide us to an optimal algorithm.
Our algorithm will be based on a reduction of our problem to
the maximum flow problem which is a consequence of the following
lemma.

\begin{lemma} \cite{BNS04} There exists a feasible schedule for the work assignment problem if and only if the corresponding graph has maximum $(s,t)$-flow equal to $\sum_{i=1}^n\frac{w_i}{v}$.\end{lemma}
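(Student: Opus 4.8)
The plan is to prove the equivalence by exhibiting explicit translations in both directions between feasible WAP schedules and $(s,t)$-flows of value $W:=\sum_{i=1}^{n}w_i/v$ in the corresponding graph $G$. The first observation, used throughout, is that the edges $(s,x_i)$ for $i=1,\dots,n$ by themselves form an $(s,t)$-cut of total capacity exactly $W$; hence every $(s,t)$-flow has value at most $W$, and it suffices to show that a flow of value $W$ exists if and only if a feasible schedule exists.

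For the forward direction I would start from a feasible schedule and let $t_{i,j}$ be the total time during which job $j_i$ is processed in interval $I_j$ (so $t_{i,j}=0$ whenever $j_i\notin A(j)$). Since the schedule runs each $j_i$ at the constant speed $v$ and completes its work $w_i$, we get $\sum_{j}t_{i,j}=w_i/v$. Then define a flow by $f(s,x_i)=w_i/v$, $f(x_i,y_j)=t_{i,j}$, and $f(y_j,t)=\sum_{j_i\in A(j)}t_{i,j}$. Flow conservation at every $x_i$ and every $y_j$ is immediate from these definitions, and the three capacity bounds hold because (i) $f(s,x_i)=w_i/v$ by construction, (ii) a single job cannot be processed for more than the length $|I_j|$ of the interval, so $t_{i,j}\le|I_j|$, and (iii) the $m_j$ processors available in $I_j$ each run for at most $|I_j|$ units of time, so $\sum_{j_i\in A(j)}t_{i,j}\le m_j|I_j|$. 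This is an $(s,t)$-flow of value $W$, which is maximum by the cut bound above.

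For the reverse direction I would take a maximum flow $\mathcal{F}$ of value $W$. Because the cut $\{(s,x_i)\}_i$ has capacity $W$, every edge $(s,x_i)$ must be saturated, so conservation at $x_i$ gives $\sum_{j}f(x_i,y_j)=w_i/v$. Set $t_{i,j}:=f(x_i,y_j)$; then $t_{i,j}\le|I_j|$, $\sum_{j_i\in A(j)}t_{i,j}=f(y_j,t)\le m_j|I_j|$, and $\sum_j t_{i,j}=w_i/v$. It remains to turn these numbers into an actual schedule, which I would do interval by interval, independently: within the window $I_j$ we must place on $m_j$ identical speed-$v$ processors the jobs of $A(j)$ with prescribed processing lengths $t_{i,j}$, where each length is at most $|I_j|$ and their sum is at most $m_j|I_j|$. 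These are exactly the conditions under which McNaughton's wrap-around rule yields a conflict-free preemptive (here also migratory) assignment. Concatenating the per-interval assignments produces a feasible WAP schedule in which $j_i$ runs at speed $v$ for total time $\sum_j t_{i,j}=w_i/v$ and hence completes exactly $w_i$ units of work.

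The only genuinely delicate point is this last step of the reverse direction: one must check that the per-interval bounds $t_{i,j}\le|I_j|$ and $\sum_{i}t_{i,j}\le m_j|I_j|$ really are sufficient to assign the pieces to $m_j$ processors without ever requiring a job to run on two processors at the same instant. This is precisely McNaughton's theorem, and it is where the migratory/preemptive nature of the model is essential. Everything else is routine bookkeeping: verifying conservation and capacities one edge-type at a time, and invoking the source cut to certify both the upper bound on flow value and the saturation of the edges $(s,x_i)$.
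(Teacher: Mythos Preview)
Your argument is correct and is, in fact, the standard proof of this equivalence: translate processing times $t_{i,j}$ to flow values and back, use the source cut to certify maximality and saturation, and invoke McNaughton's wrap-around rule to realise the per-interval assignments. The paper itself does not prove this lemma; it simply cites it from \cite{BNS04}, so there is no ``paper's own proof'' to compare against. Your write-up would serve as a complete, self-contained proof where the paper has none.
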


At this point, we state a Lemma concerning the upstream nodes that we will need in one
of the proofs that follow. Also, for completeness, we present a proof that can be also be
found in \cite{KT06}.

\begin{claim} \cite{KT06}
The set of upstream nodes is reachable from the source node $s$ in the residual
graph of any maximum flow and therefore they can be found by
performing a breadth-first-search (BFS) starting from $s$.
\end{claim}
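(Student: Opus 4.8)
The plan is to show that the set of upstream nodes coincides with the set $A$ of nodes reachable from $s$ in the residual graph $G_f$ of an arbitrary maximum flow $\mathcal{F}$; the BFS statement is then immediate. Throughout I will use the constructive half of the max-flow--min-cut theorem in the following form: since $\mathcal{F}$ is maximum we have $t\notin A$ (otherwise a residual $s$-$t$ path would be an augmenting path, contradicting maximality), and, by the very definition of $A$, no residual edge leaves $A$; hence every edge of $G$ from $A$ to $V\setminus A$ is saturated and every edge from $V\setminus A$ to $A$ carries no flow, so the capacity of the cut $(A,V\setminus A)$ equals the value of $\mathcal{F}$, which makes $(A,V\setminus A)$ a minimum $(s,t)$-cut.

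The first inclusion is then one line: an upstream node lies on the source side of every minimum cut, so in particular it lies in $A$, i.e. it is reachable from $s$ in $G_f$. Since $\mathcal{F}$ was an arbitrary maximum flow, this already establishes the statement as phrased.

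For the converse --- which is what guarantees that the BFS from $s$ returns \emph{exactly} the upstream nodes --- I would fix a node $v\in A$ and an arbitrary minimum $(s,t)$-cut $(X,Y)$ and argue that $v\in X$. The key fact I would establish first is that across \emph{any} minimum cut every forward edge (from $X$ to $Y$) is saturated and every backward edge (from $Y$ to $X$) has zero flow: this is the equality case of
\[ |\mathcal{F}| \;=\; \sum_{e:\,X\to Y} f(e)\;-\;\sum_{e:\,Y\to X} f(e)\;\le\;\sum_{e:\,X\to Y} c(e)\;=\;c(X,Y), \]
and equality holds here because $\mathcal{F}$ is maximum and $(X,Y)$ is minimum, so the two extreme quantities are equal. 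It follows that no residual edge of $G_f$ crosses from $X$ to $Y$ (a forward residual edge would require an unsaturated forward edge, a backward residual edge a backward edge carrying positive flow, and both are now ruled out). But if $v$ were in $Y$, a residual $s$-$v$ path would have to cross from $X$ to $Y$ at some point, a contradiction. Hence $v\in X$, and since $(X,Y)$ was arbitrary, $v$ is upstream.

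I do not expect a genuine obstacle here; the only delicate point is the characterization of the edges crossing a minimum cut, and specifically the fact that it is the combination ``$\mathcal{F}$ maximum \emph{and} $(X,Y)$ minimum'' --- which forces equality throughout the displayed chain --- that yields saturation of the forward edges and emptiness of the backward edges. Once that is in place, both inclusions follow at once, and the BFS claim is a direct corollary.
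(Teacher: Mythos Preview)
Your proposal is correct and follows essentially the same approach as the paper: define the set $A$ of nodes reachable from $s$ in $G_f$, observe that $(A,V\setminus A)$ is a minimum cut to get the first inclusion, and for the converse argue that a residual $s$--$v$ path cannot cross from the source side to the sink side of any minimum cut. The paper's proof is terser and leaves the key fact (no residual edge crosses a minimum cut from $X$ to $Y$) implicit, whereas you spell out the equality-case argument explicitly; but the logical skeleton is identical.
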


\begin{proof}
Let $(X,Y)$ the cut found after performing
a BFS on the residual graph $G_f$, starting from
the source $s$, at the end of any maximum
flow algorithm. If a  node $v$ is upstream then it must belong
to $X$. Conversely, assume that $v\in X$ and $v$ is not an
upstream node. This means that there is a cut $(X',Y')$ with
$v\in Y'$. Given that $v\in X$, there is a path $P$
from $s$ to $v$. Since $v \in Y'$, $P$ must have an edge
$(u,w)$ with $u$ in $X'$ and $w \in Y'$. However this is a
contradiction since there is an edge in $G_f$ that goes
from the source side to the sink side of a minimum cut.
\end{proof}

The following lemmas
that involve the notions of {\em critical job}
and {\em critical instance} are important ingredients for the analysis
of our algorithm.

\begin{lemma} \label{OneCrit}  If $<\mathcal{J},\mathcal{I},v>$ is a
critical instance of WAP, then there is at least one critical job $j_i\in\mathcal{J}$. \end{lemma}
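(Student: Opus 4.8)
The plan is to argue via the max-flow/min-cut structure of the corresponding graph $G$ of the critical instance $\langle\mathcal{J},\mathcal{I},v\rangle$. By Lemma 2, since the instance is feasible, every maximum flow in $G$ saturates all source edges $(s,x_i)$, so it has value $\sum_{i=1}^n w_i/v$. Because $v$ is the \emph{minimum} speed for which $\mathcal{J}$ can be scheduled over $\mathcal{I}$, decreasing $v$ by any $\epsilon>0$ makes the instance infeasible; equivalently, in $G$ the source edges cannot be the unique bottleneck — there must be a min $(s,t)$-cut that does \emph{not} consist solely of source edges, i.e. a min cut whose capacity is ``tight'' in a way that involves the $(x_i,y_j)$ and $(y_j,t)$ edges. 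I would make this precise by noting that if every min cut were exactly the set of all source edges, then slightly increasing the capacities $w_i/v$ of the source edges (which is what lowering $v$ does) would still admit a flow saturating them, contradicting minimality of $v$.

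Next I would bring in the notion of upstream nodes and Claim 1. Consider the min cut $(X,Y)$ obtained by running a max-flow algorithm and taking $X$ to be the nodes reachable from $s$ in the residual graph $G_f$; by Claim 1 these are exactly the upstream nodes. From the previous paragraph, $Y$ contains at least one job node or interval node besides $t$. The key observation is that there must exist a job node $x_i\in Y$: if $Y$ contained only interval nodes $y_j$ (and $t$), then the cut edges would include $(s,x_i)$ for every job $j_i$ alive in one of those intervals — but a cleaner route is to directly locate a job node that is upstream-separated. Concretely, pick any interval node $y_j\in Y$ (it exists because the cut is not all source edges, so some $(y_j,t)$ edge is cut); since $y_j\in Y$ and $y_j$ is reachable only through some $x_i$, and since the cut is minimum, one shows the relevant $x_i$ with $j_i\in A(j)$ must also lie in $Y$ — otherwise the edge $(x_i,y_j)$ would be a cut edge that could be removed by moving $y_j$ to $X$, reducing the cut. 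This forces a job node $x_i\in Y$, i.e. $x_i$ is \emph{not} upstream, which is the same as saying it is a ``critical'' candidate.

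Then I would verify that such a job $j_i$ (with $x_i\in Y$ for this particular min cut — and in fact I must upgrade this to: $x_i$ lies on the sink side of \emph{every} min cut, or rather that $j_i$ satisfies the criticality definition) is indeed critical. The definition requires that for every feasible schedule and every $I_j\subseteq span_i$, either $t_{i,j}=|I_j|$ or $\sum_{j_k\in A(j)}t_{k,j}=m_j|I_j|$; equivalently, for every max flow, either $(x_i,y_j)$ or $(y_j,t)$ is saturated. Suppose not: some max flow leaves both $(x_i,y_j)$ and $(y_j,t)$ unsaturated for some $I_j\ni j_i$. Then in the residual graph there is room to push flow from $s$ through $(s,x_i)$ (which, being a source edge, is saturated — so here I instead reroute): one uses the residual path structure to show that a small amount of flow could be shifted so as to free capacity on a source edge, contradicting that all source edges are saturated at maximum value; alternatively, that $v$ was not minimal. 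This last step — converting ``both edges slack in some max flow'' into a contradiction with criticality of the instance — is the main obstacle, and I expect it to require a careful augmenting-path / flow-decomposition argument rather than a one-line observation. The cleanest version is likely: if for some max flow both $(x_i,y_j)$ and $(y_j,t)$ have slack, then that max flow would remain a feasible (max) flow after increasing $c(s,x_i)=w_i/v$ slightly, i.e. after decreasing $v$, contradicting that $\langle\mathcal{J},\mathcal{I},v\rangle$ is critical.
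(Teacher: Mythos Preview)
Your overall strategy---show that the min-cut structure of the flow graph forces some job node to be ``blocked'' on every $x_i\to t$ path---is the right one, but the execution has a genuine gap that makes the argument collapse.

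The error is in the second paragraph. You work in $G$, the graph at speed $v$, where the instance is \emph{feasible}. By Lemma~2 every max flow in $G$ saturates all source edges $(s,x_i)$, so the source cut $(\{s\},V\setminus\{s\})$ has capacity exactly equal to the max-flow value and is therefore itself a min cut. Consequently, in the residual graph of any max flow there is no forward edge out of $s$, the set of nodes reachable from $s$ is just $\{s\}$, and the upstream cut you take is precisely the source-only cut. Your inference ``from the previous paragraph, $Y$ contains at least one job or interval node'' is then vacuous (every $x_i$ lies in $Y$), and the non-trivial min cut you correctly proved to exist in the first paragraph is simply not the one you are looking at. This also explains why the final verification step feels like an obstacle: having $x_i\in Y$ gives you nothing, since it holds for every job, critical or not.

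What you actually need is a job node on the \emph{source} side of some min cut. If $x_i\in X'$ for a min cut $(X',Y')$, then for each adjacent $y_j$ either $(x_i,y_j)$ or $(y_j,t)$ crosses $(X',Y')$ and is hence saturated by \emph{every} max flow; that is exactly criticality of $j_i$. Your first paragraph already shows such a cut $(X',Y')$ with $X'\supsetneq\{s\}$ exists, and a short argument (moving an isolated $y_j$ back to $Y'$ only increases capacity) shows $X'$ must contain some $x_i$. So the fix is to take the \emph{downstream} min cut, not the upstream one. The paper avoids this pitfall differently: it passes to the graph $G'$ at speed $v-\epsilon$, where not all source edges can be saturated; there some $x_i$ is genuinely upstream, and the same cut-crossing argument applies.
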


\begin{proof}
Let $G$ be the graph that corresponds to a critical instance $<\mathcal{J},\mathcal{I},v>$, and let $G'$ be the graph that corresponds to the instance $<\mathcal{J},\mathcal{I},v-\epsilon>$, for a small constant $\epsilon>0$ that approaches zero. Since $<\mathcal{J},\mathcal{I},v>$ is critical, there is no feasible $(s,t)$-flow equal to $\sum_{j_i\in\mathcal{J}}\frac{w_i}{v-\epsilon}$ in $G'$. Because of the $max$ $flow-min$ $cut$ theorem, we can conclude that any minimum $(s,t)$-cut of $G'$ has capacity strictly less than $\sum_{j_i\in\mathcal{J}}\frac{w_i}{v-\epsilon}$ and as a result, there is no minimum $(s,t)$-cut of $G'$ that includes all edges $(s,x_i)$. If all edges $(s,x_i)$ were included in a minimum $(s,t)$-cut, then $G'$ would have an $(s,t)$-flow in which all these edges would be saturated which implies that there would be a feasible $(s,t)$-flow for $G'$ with value $\sum_{j_i\in\mathcal{J}}\frac{w_i}{v-\epsilon}$.


The remainder of the proof is based on the notion of {\em upstream nodes}.
For that, it suffices to observe that given any maximum flow,
there is always an edge $(s,x_i)$
that is not saturated.
Firstly, we need to show that there is
always an $x_i$ node in $G'$ which belongs to the set
of upstream nodes.
If we apply breadth-first search on the residual graph $G_f$,
we will reach $x_i$ which implies that $x_i$
is upstream. Thus, for every path $x_i,y_j,t$ of $G'$, there is always
  an edge
$(x_i,y_j)$ or $(y_j,t)$ that is saturated
by any maximum flow. This holds since if not, there would be
an unsaturated $(s,t)$ path (a path is {\em saturated} if at least one of its edges is saturated) contradicting the maximality of the flow.
Hence, $j_i$, the job that corresponds to $x_i$,
 is a critical job.
\end{proof}

\begin{lemma} \label{AllCrit}
Let $G=(V,E)$ be the graph that corresponds to the instance $<\mathcal{J},\mathcal{I},v>$ of the WAP. If the edge $(y_j,t)\in E$ belongs to a minimum $(s,t)$-cut of $G$ and there is a maximum $(s,t)$-flow such that $f(x_i,y_j)>0$, then $j_i$ is critical.
\end{lemma}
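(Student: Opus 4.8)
The plan is to produce, from the hypotheses, a single minimum cut that simultaneously certifies the criticality of $j_i$ against \emph{every} maximum flow, rather than reasoning flow-by-flow. Throughout I write $I_k$ for a generic interval with $j_i\in A(k)$ (equivalently $I_k\subseteq span_i$); by the graph-theoretic definition of criticality I must show that for every maximum flow $g$, either $(x_i,y_k)$ or $(y_k,t)$ is saturated by $g$.

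First I would fix a minimum $(s,t)$-cut $(X^{*},Y^{*})$ of $G$ whose edge set contains $(y_j,t)$; since every $(s,t)$-cut has $s$ on the source side and $t$ on the sink side, this forces $y_j\in X^{*}$ and $t\in Y^{*}$. I then use the standard max-flow/min-cut complementary-slackness fact: for a \emph{minimum} cut $(X^{*},Y^{*})$ and \emph{any} maximum flow $h$, every edge directed from $X^{*}$ to $Y^{*}$ is saturated by $h$, and every edge directed from $Y^{*}$ to $X^{*}$ carries zero $h$-flow. Applying the zero-flow half to the maximum flow $f$ supplied by the hypothesis and to the edge $(x_i,y_j)$ — which satisfies $f(x_i,y_j)>0$ and has head $y_j\in X^{*}$ — I conclude that $x_i\in X^{*}$; otherwise $(x_i,y_j)$ would be a $Y^{*}\to X^{*}$ edge carrying positive flow, a contradiction.

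Now take an arbitrary maximum flow $g$ and an arbitrary interval $I_k$ with $j_i\in A(k)$, and suppose $(x_i,y_k)$ is not saturated by $g$; I must show $(y_k,t)$ is saturated by $g$. Since $x_i\in X^{*}$ and $g$ is maximum, $y_k$ cannot lie in $Y^{*}$: if it did, $(x_i,y_k)$ would be an $X^{*}\to Y^{*}$ edge and hence saturated by $g$, contradicting our assumption. Therefore $y_k\in X^{*}$, which makes $(y_k,t)$ an $X^{*}\to Y^{*}$ edge, hence part of the minimum cut $(X^{*},Y^{*})$, hence saturated by the maximum flow $g$. As $g$ and $I_k$ were arbitrary, this is precisely the statement that $j_i$ is critical.

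The only step that genuinely needs care — and the only place I would spell out details — is the complementary-slackness fact for minimum cuts and maximum flows (crossing edges all saturated, backward edges all empty, for every maximum flow). I would either cite it or give the one-line justification by comparing $\mathrm{val}(h)=\sum_{e:\,X^{*}\to Y^{*}}h(e)-\sum_{e:\,Y^{*}\to X^{*}}h(e)$ with $\mathrm{cap}(X^{*},Y^{*})=\sum_{e:\,X^{*}\to Y^{*}}c(e)$ and using $0\le h(e)\le c(e)$ edgewise. Everything else is a short, obstacle-free case distinction, and in particular this argument does not require the upstream-node machinery of the earlier Claim.
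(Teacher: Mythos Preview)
Your proof is correct. Both your argument and the paper's ultimately establish criticality by exhibiting a minimum cut that separates $x_i$ from $t$ using only edges of the form $(x_i,y_k)$ or $(y_k,t)$; once such a cut exists, the fact that min-cut edges are saturated by every maximum flow immediately yields criticality. The difference lies in how that cut is produced. The paper reasons with the particular flow $\mathcal{F}$: via a flow-rerouting step it argues that $\mathcal{F}$ must saturate every $x_i$-to-$t$ path (otherwise one could divert flow off $(y_j,t)$, contradicting its membership in a min cut), and then asserts the existence of a new min cut $\mathcal{C}'$ through those saturated edges. You instead show directly that $x_i$ already lies on the source side of the \emph{given} cut $(X^{*},Y^{*})$, by applying the backward-edge half of complementary slackness to the edge $(x_i,y_j)$. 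This is more economical: no auxiliary cut needs to be constructed, and the step the paper leaves somewhat implicit --- why the edges saturated by $\mathcal{F}$ can be assembled into a minimum cut --- is bypassed entirely.
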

\begin{proof} Suppose that the edge $(y_j,t)$ belongs to a minimum $(s,t)$-cut $\mathcal{C}$ and that there is a maximum $(s,t)$-flow $\mathcal{F}$ such that $f(x_i,y_j)>0$. $\mathcal{C}$ is saturated by any maximum flow. Since $f(x_i,y_j)>0$, it is not possible that a path from $x_i$ to $t$ is left unsaturated  by $\mathcal{F}$ because if this was the case, then
we could send part of $f(x_i,y_j)$ through the unsaturated path and this would contradict the fact that $(y_j,t)$ belongs to a minimum $(s,t)$-cut. Since $\mathcal{F}$ is a maximum $(s,t)$-flow and saturates all the paths from $x_i$ to $t$, there should be a minimum $(s,t)$-cut $\mathcal{C}'$ that contains one edge from each such path (the one that is saturated by $\mathcal{F}$). Hence, $j_i$ is critical.
\end{proof}
\newline

Our algorithm is based on the following lemma in order to determine critical jobs.
\begin{lemma}
Assume that $<\mathcal{J},\mathcal{I},v>$ is a critical instance for WAP
and let $G'$ be the graph that corresponds to the instance
$<\mathcal{J},\mathcal{I},v-\epsilon>$. Then, any minimum $(s,t)$-cut $\mathcal{C}'$ of $G'$ contains:
\begin{itemize}
\item[(i)] exactly one edge of every path $x_i, y_j,t$ for any critical job
$j_i$ of $G$,
\item[(ii)] all the $(s,x_i)$ edges for any non-critical job $j_i$
of $G$.
\end{itemize}
\end{lemma}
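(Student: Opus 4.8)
The plan is to establish the single fact that, for every minimum $(s,t)$-cut $\mathcal{C}'$ of $G'$ with associated vertex bipartition $(X,Y)$ (where $s\in X$, $t\in Y$), the set $\mathcal{J}^\star:=\{j_i:x_i\in X\}$ is \emph{exactly} the set of critical jobs of $G$. Both (i) and (ii) follow from this together with one elementary observation about $\mathcal{C}'$: if $x_i\in X$ then $(s,x_i)$ is not a cut arc, so for every $I_j$ with $j_i\in A(j)$ the path $s,x_i,y_j,t$ is disconnected by exactly one of $(x_i,y_j)$ (which belongs to $\mathcal{C}'$ iff $y_j\in Y$) and $(y_j,t)$ (which belongs to $\mathcal{C}'$ iff $y_j\in X$) --- and these two cases are complementary, so precisely one arc of the sub-path $x_i,y_j,t$ lies in $\mathcal{C}'$. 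Hence, once $\mathcal{J}^\star=\{\text{critical jobs of }G\}$ is known, a critical job $j_i$ has $x_i\in X$ and therefore satisfies (i), while a non-critical job $j_i$ has $x_i\in Y$, so $(s,x_i)\in\mathcal{C}'$, which is (ii).

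To show every critical job $j_i$ lies in $\mathcal{J}^\star$, I would prove that $x_i$ is an upstream node of $G'$, so that $x_i\in X$ for \emph{every} minimum cut (by the characterisation of upstream nodes). Assume not; then $x_i$ is unreachable from $s$ in the residual graph of every maximum flow of $G'$. Take a maximum flow of $G$ --- of value $\sum_k w_k/v$, saturating all source arcs, since $<\mathcal{J},\mathcal{I},v>$ is feasible --- and augment it inside $G'$ to a maximum flow $\mathcal{F}^\star$ of $G'$; since augmenting paths never decrease flow on source arcs, $f^\star(s,x_k)\ge w_k/v$ for all $k$, and since $x_i$ is not source-reachable, $(s,x_i)$ is saturated, i.e. $f^\star(s,x_i)=w_i/(v-\epsilon)>w_i/v$. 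Decomposing the flow through $x_i$ into $s$-$t$ paths $s,x_i,y_j,t$ and cancelling $w_i/(v-\epsilon)-w_i/v$ units of them, then cancelling the remaining source overflows as well (which only lowers some $f(y_j,t)$ further), yields a flow $\mathcal{F}''$ of $G$ in which every source arc carries exactly $w_k/v$ --- hence a maximum flow of $G$ --- but in which $(x_i,y_j)$ and $(y_j,t)$ are both strictly unsaturated for at least one interval $I_j\ni j_i$. This makes $j_i$ non-critical, a contradiction.

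To show every job $j_i$ of $\mathcal{J}^\star$ is critical in $G$, I would pass to the subgraphs $G'^\star$ and $G^\star$ induced on the nodes of $\mathcal{J}^\star$ (together with $s,t$ and all $y_j$) in $G'$ and in $G$, respectively. Because the arcs $(s,x_k)$ with $j_k\notin\mathcal{J}^\star$ all lie in $\mathcal{C}'$, restricting a maximum flow of $G'$ to $\mathcal{J}^\star$ gives a maximum flow of $G'^\star$ and shows that $\mathcal{C}^\star:=\mathcal{C}'\cap E(G'^\star)$ is a minimum cut of $G'^\star$. Since $\mathcal{C}^\star$ contains no source arc its capacity is unchanged in $G^\star$, and since $\mathcal{J}^\star$ is feasible at speed $v$ (being a subset of $\mathcal{J}$) while $<\mathcal{J},\mathcal{I},v-\epsilon>$ is infeasible --- so the maximum flow of $G'$ is strictly below $\sum_k w_k/(v-\epsilon)$, exactly as in the proof of Lemma~\ref{OneCrit} --- one obtains $\sum_{j_k\in\mathcal{J}^\star}w_k/v\le c_{G^\star}(\mathcal{C}^\star)<\sum_{j_k\in\mathcal{J}^\star}w_k/(v-\epsilon)$. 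For $\epsilon$ small enough (there being only finitely many possible cut values) this squeezes $c_{G^\star}(\mathcal{C}^\star)$ down to $\sum_{j_k\in\mathcal{J}^\star}w_k/v=\mathrm{maxflow}(G^\star)$, so $\mathcal{C}^\star$ is a minimum cut of $G^\star$ as well. By the elementary observation, for each $j_i\in\mathcal{J}^\star$ and each $I_j\ni j_i$ one of $(x_i,y_j)$, $(y_j,t)$ lies in $\mathcal{C}^\star$, and being a minimum-cut arc of $G^\star$ it is saturated by every maximum flow of $G^\star$. Since the restriction to $\mathcal{J}^\star$ of any maximum flow of $G$ is a maximum flow of $G^\star$ (all source arcs of $G$ being saturated), pushing the saturation back to $G$ --- using $f(y_j,t)\ge\sum_{j_k\in\mathcal{J}^\star,\,j_k\in A(j)}f(x_k,y_j)$ --- shows that in every maximum flow of $G$, $(x_i,y_j)$ or $(y_j,t)$ is saturated for every $I_j\ni j_i$, i.e. $j_i$ is critical.

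I expect the main obstacle to be the change of speed: ``critical job'' is a property of $G$ (speed $v$), whereas the object being dissected is a cut of $G'$ (speed $v-\epsilon$). Bridging this gap cleanly is precisely what the two arguments above do --- in one direction by the ``augment, then cancel the source overflow'' manoeuvre, which needs careful bookkeeping so the resulting flow is genuinely maximum in $G$ and genuinely witnesses non-criticality of $j_i$; in the other by the limiting argument that lets the restricted minimum cut of $G'^\star$ survive as a minimum cut of $G^\star$. The remaining ingredients are just the elementary cut structure of the layered graph, the maximum-flow characterisation of WAP-feasibility, the upstream-node characterisation, and the max-flow/min-cut theorem.
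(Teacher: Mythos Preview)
Your proposal is correct, and it is considerably more careful than the paper's own argument, but it takes a genuinely different route.

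The paper proves each part directly by contradiction in a few lines. For (i) it assumes some path $x_i,y_j,t$ avoids $\mathcal{C}'$, asserts that then ``there is a maximum $(s,t)$-flow $\mathcal{F}$ that does not saturate the edges $(x_i,y_j)$ and $(y_j,t)$'', and observes this contradicts criticality (using the earlier remark that a job critical at speed $v$ is also critical at $v-\epsilon$). For (ii) it assumes $(s,x_i)\notin\mathcal{C}'$, asserts ``there is a maximum $(s,t)$-flow $\mathcal{F}$ that does not saturate this edge'', and notes that then every $x_i,y_j,t$ path must be saturated (else $\mathcal{F}$ is not maximum), forcing $j_i$ to be critical. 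These existential claims about unsaturating max flows are stated without proof; your argument can be read as supplying exactly the missing justification.

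Your approach instead characterises the $X$-side job set $\mathcal{J}^\star$ of \emph{every} minimum cut as precisely the set of critical jobs, and then reads off (i) and (ii) from the layered structure. For the inclusion ``critical $\Rightarrow x_i\in X$'' you show $x_i$ is upstream via an explicit flow construction (augment a max flow of $G$ inside $G'$, then cancel source overflows to land back in $G$); for ``$x_i\in X\Rightarrow$ critical'' you restrict to the induced subgraphs $G^\star,G'^\star$ and use a finiteness/squeeze argument on $\epsilon$. This buys you a fully rigorous, self-contained proof and, as a bonus, the clean structural statement $\mathcal{J}^\star=\{\text{critical jobs}\}$. The price is length, and the explicit dependence on $\epsilon$ being small enough in your second direction---which the lemma does not state but which the paper also assumes implicitly (see the remark after the algorithm that ``$\epsilon$ must be such that $v-\epsilon$ is strictly greater than the next critical speed'').
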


\begin{proof}
Consider any critical job $j_i$. Assume that there is a path
$x_i, y_j,t$ in $G'$ such that
none of its edges belong to a minimum $(s,t)$-cut
$\mathcal{C}$. Then there
is a maximum $(s,t)$-flow $\mathcal{F}$ that does not saturate the edges
$(x_i,y_j)$ and $(y_j,t)$.
If the edge $(s,x_i)$ was not saturated, then  $\mathcal{F}$
would not be a maximum flow. On the other hand, if  $(s,x_i)$ was
saturated by $\mathcal{F}$, then job $j_i$ would not be critical
for  $<\mathcal{J},\mathcal{I},v>$. In both cases, we have a contradiction.

Similarly, assume that $j_i$ is not critical for the instance
$<\mathcal{J},\mathcal{I},v>$ and suppose that the edge $(s,x_i)$ does not
belong to a minimum cut of G'. This means that there is a maximum $(s,t)$-flow
$\mathcal{F}$ that does not saturate this edge. If there is at least one path
$x_i,y_j,t$ that is not saturated, then $\mathcal{F}$ is not maximum
and if all paths are saturated then $j_i$ is a critical job for
 $<\mathcal{J},\mathcal{I},v>$, which is a contradiction.
\end{proof}

\newpage

\subsection{The BAL Algorithm}

We are now ready to give a high level description of our algorithm. Initially, we will assume that the optimal schedule consumes an unbounded amount of energy and we assume that all jobs are executed with the same speed $s_{UB}$. This speed is such that there exists a feasible schedule that executes all jobs with the same speed. Then, we decrease the speed of all jobs up to a point where no further reduction is possible so as to obtain a feasible schedule. At this point, all jobs are assumed to be executed with the same speed, which is critical, and there is at least one job that cannot be executed with speed less than this. The jobs that cannot be executed with speed less than the critical one form the current set
of critical jobs. So, the critical job(s) is (are)
assigned the critical speed and is (are) ignored after this point. That is, in what follows, the algorithm considers the subproblem in which some jobs are omitted (critical jobs), because they are already assigned the lowest speed possible (critical speed) so that they can be feasibly executed, and there are less than $m$ processors during some intervals because these processors are dedicated to the omitted jobs (i.e. we get an instance of WAP).
Our algorithm can be described as follows:
\begin{algorithm}[h!] \nonumber
\caption{BAL}
\label{alg1}
\begin{algorithmic}[1]
\STATE $s_{UB}=\max\{\max_j\{\frac{\sum_{j_i \in A(j)} w_i}{|I_j|}\},\max_{j_i}\{den_i\}\}$, $s_{LB}=\max_{j_i\in\mathcal{J}}\{den_i\}$
\WHILE {$\mathcal{J}\neq\emptyset$}
\STATE Find the minimum speed $s_{crit}$ so that the instance $<\mathcal{J}, \mathcal{I}, s_{crit}>$ of the WAP problem  is feasible, using binary search in the interval $[s_{LB},s_{UB}]$, through
repeated maximum flow computations.
\STATE Determine the set of critical jobs $\mathcal{J}_{crit}$.
\STATE Assign to the critical jobs speed $s_{crit}$ and set $\mathcal{J}=\mathcal{J}\backslash\mathcal{J}_{crit}$.
\STATE Update $\mathcal{I}$, i.e., the number of available processors $m_j$ for each interval
$I_j$.
\STATE $s_{UB}=s_{crit}$, $s_{LB}=\max_{j_i\in\mathcal{J}}\{den_i\}$
\ENDWHILE
\STATE Use the optimal algorithm for $P|r_i,d_i,pmtn|-$ to schedule each job with processing time $w_i/s_i$.
\end{algorithmic}
\end{algorithm}

We denote $s_{crit}$ the critical speed and $\mathcal{J}_{crit}$ the set of critical jobs. We know that each job will be executed with speed not less than its density. Therefore, given a set of jobs $\mathcal{J}$, we know that there does not exist a feasible schedule that executes all jobs with the speed $s<\max_{j_i\in\mathcal{J}}\{den_i\}$. Also, observe that
no job has speed $s>\max\{\max_j\{\frac{\sum_{j_i \in A(j)} w_i}{|I_j|}\},\max_{j_i}\{den_i\}\}$. These bounds define the search space of the binary search for the first step of the algorithm in order to determine the minimum speed for which there is a feasible schedule that executes all jobs in $\mathcal{J}$ with the same speed. In the subsequent
step the current speed (i.e. the critical speed of the previous step) is an upper bound on the speed of all remaining
jobs and a lower bound is the maximum density among them. We
use these updated bounds  to perform
a new binary search and we go on like that. At this point, note that binary search has already been
used in other works as part of  optimal polynomial-time
algorithms for scheduling problems with speed scaling (see \cite{AF07} and \cite{PUW08}).

In order to complete the description of our algorithm, it remains to
explain the way critical jobs are determined. Because of Lemma 5, this can be done by
finding a minimum $(s,t)$-cut in the graph $G'$ that corresponds to $<\tilde{\mathcal{J}},\tilde{\mathcal{I}},v-\epsilon>$
where $\tilde{\mathcal{J}}$ and $\tilde{\mathcal{I}}$ correspond to the current instance of the WAP.
Note that $\epsilon$ must be such that $v-\epsilon$ is strictly greater than the next critical speed.

Algorithm BAL produces an optimal schedule, and this holds because any schedule constructed by the algorithm satisfies the properties of Lemma \ref{KKT-lem}.

\begin{theorem} Algorithm BAL  produces an optimal schedule. \end{theorem}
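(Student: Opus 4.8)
The plan is to appeal to Lemma~\ref{KKT-lem}: it is enough to show that the schedule returned by \algo{BAL} satisfies Properties~1--5 of that lemma, since those are sufficient for optimality. Before checking them I would record two structural facts about the execution of the algorithm. First, \algo{BAL} terminates: by Lemma~\ref{OneCrit} every critical instance of the WAP has at least one critical job, so each pass of the while-loop removes at least one job from $\mathcal J$ and there are at most $n$ iterations; moreover $s_{LB}=\max_{j_i}\{den_i\}$ is a genuine lower bound (no job runs slower than its density) and the stored $s_{UB}$ is a genuine upper bound for the current sub-instance, so the binary search returns the minimum feasible common speed $s_{crit}$. Second, the critical speeds $s_{crit}^{(1)},s_{crit}^{(2)},\dots$ produced in successive iterations are non-increasing: starting from a feasible WAP-schedule for the sub-instance at speed $s_{crit}^{(k)}$, delete the jobs that become critical at iteration $k$ and decrease the processor counts $m_j$ accordingly; what remains is a feasible schedule of the next sub-instance at the \emph{same} speed, so $s_{crit}^{(k+1)}\le s_{crit}^{(k)}$, which is also what makes the update $s_{UB}:=s_{crit}$ a legal upper bound for the next binary search. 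Hence a job frozen at iteration $k$ has speed at least that of any job frozen later, and jobs frozen at the same iteration share a common speed.

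With these facts, Property~1 is immediate, since every job is frozen once and for all at a single speed. Property~5 I would prove by induction on the iterations using convexity of $s\mapsto s^{\alpha}$: if $a_j\le m$ then, because deleting a critical job alive in $I_j$ lowers both the number of alive jobs and $m_j$ by one, the invariant $a_j^{(k)}\le m_j^{(k)}$ is preserved, so in every sub-instance in which $I_j$ still occurs each alive job is run throughout $I_j$, giving $t_{i,j}=|I_j|$ for all $j_i$ with $I_j\subseteq span_i$. For Properties~2, 3 and 4 I would combine monotonicity with the characterisation of critical jobs (the definition together with Lemma~5). A job $j_i$ declared critical at iteration $k$ satisfies, for every interval $I_j\subseteq span_i$ present at that iteration, either $t_{i,j}=|I_j|$ or all $m_j^{(k)}$ processors available in $I_j$ at iteration $k$ are saturated; and the $m-m_j^{(k)}$ processors already withdrawn from $I_j$ serve only jobs frozen in earlier iterations, whose speed is $\ge s_{crit}^{(k)}=s_i$. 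From this one reads off: if $t_{i,j}=0$ then $I_j$ is fully occupied by jobs of speed $\ge s_i$, which is Property~2; if $t_{i,j}=|I_j|$ then any job alive in $I_j$ that does not fill $I_j$ was frozen no earlier than $j_i$ and hence has speed $\le s_i$, which is Property~3; and two jobs with $0<t_{i,j},t_{i',j}<|I_j|$ must have been frozen at the same iteration and hence have equal speed, which is Property~4. Finally, the last step turns the computed processing times $w_i/s_i$ into an actual feasible schedule by solving $P|r_i,d_i,pmtn|-$, so the output is indeed feasible.

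The step I expect to be the real obstacle is making the passage between the \emph{level-by-level} construction of \algo{BAL} (freeze the critical jobs, withdraw from each interval the processor time they use, recurse) and the \emph{per-interval} quantities $t_{i,j}$ of the final schedule fully rigorous. One has to argue that the processor time withdrawn from each $I_j$ is exactly accounted for by the critical jobs frozen there, that ``all $m_j^{(k)}$ processors saturated at iteration $k$'' transfers correctly to ``$I_j$ fully occupied in the output schedule'', and that a job frozen at a strictly later iteration never ends up running in an interval in which a still-alive, strictly faster critical job is idle. I expect this to require an explicit choice of the feasible WAP-schedule to which \algo{BAL} commits at each level (for instance one extracted from the maximum flow used to detect criticality) together with an induction showing that these committed partial schedules can always be extended to the remaining jobs. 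The remaining points --- the convexity exchange argument, the density bounds, and the correctness of the $P|r_i,d_i,pmtn|-$ step --- are routine.
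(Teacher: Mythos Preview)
Your proposal is correct and follows essentially the same strategy as the paper: verify that the output of \algo{BAL} satisfies Properties~1--5 of Lemma~\ref{KKT-lem}. The paper's proof is terser and leans more directly on Lemma~\ref{AllCrit} for Properties~3 and~4 (where you instead argue via monotonicity of the critical speeds and the definition of criticality), and it does not explicitly treat Property~5 or the termination/monotonicity facts you spell out; the ``real obstacle'' you flag---reconciling the level-by-level construction with the per-interval quantities $t_{i,j}$ of the final schedule---is likewise glossed over in the paper.
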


\begin{proof}
First of all, it is obvious that the algorithm assigns to every  job a constant speed because each job is assigned exactly one speed in one iteration.
Because of Lemma \ref{AllCrit}, we know that all jobs that have $0<t_{i,j}<|I_j|$ will have the same speed because when such a job is critical all other jobs
of the same kind are critical as well and are assigned the same speed. For the same reason, each job with $t_{i,j}=|I_j|$ will be assigned the same speed with all jobs that will run during $I_j$ or a greater one in a previous step.

Now, consider the case where $s_i=0$ for a job $j_i$ during an interval $I_j\subseteq span_i$. When $j_i$ is assigned a speed by the algorithm, it is critical. Hence, in every interval $I_j$ such that $j_i$ is alive, apart from the ones
whose processors were already occupied in previous iterations, we know that either $t_{i,j}=|I_j|$ or $\sum_{j_i\in A(j)}t_{i,j}=m_j|I_j|$, where $m_j$ is the number of the available processors. Therefore, if $t_{i,j}=0$ then there
are two cases: either (i)  $I_j$ had all its processors occupied in a previous iteration than the one that $j_i$ was assigned a speed, or (ii) this happened at the same iteration and the minimum speed that a job has during this interval is not less than the one of $j_i$. Hence, $j_i$ cannot get greater speed than any job executed during $I_j$.
Finally, because of Lemma 5, BAL correctly identifies the critical jobs
at each step of the algorithm. The theorem follows.
\end{proof}

We turn, now, our attention to the complexity of the algorithm. Because of Lemma \ref{OneCrit} at least one job (all critical ones) is scheduled at each step of the algorithm. Therefore, there will be at most $n$ steps. Assume that $P$ is the range of all possible
values of speeds divided by our desired accuracy. Then binary search, needs
$O(\log P)$ values of speed to determine the next critical speed at one step. That is,
BAL performs $O(\log P)$ maximum flow calculations at each step.
Thus,  the overall complexity
of our algorithm is   $O(nf(n)\log P)$.
\newline
\newline
\noindent
{\bf Relation of BAL with  the algorithm of Albers et al. \cite{AAG11}.}
The high-level idea of the algorithm in \cite{AAG11} is similar
 with the one of BAL.
Both algorithms can be decomposed  in
a number of steps (phases) and at each step,
a subset of jobs (the critical ones) is scheduled.
The difference between the two
algorithms is the way each step is performed.
In \cite{AAG11}, a step is as follows:
at the beginning, all remaining jobs
are conjectured to be critical.
Then, the set of (potential) critical jobs is reduced
through repeated maximum flow computations.
Once the set of critical jobs of a particular step
is determined, their algorithm specifies
the way these jobs are executed. In the worst case, their algorithm
performs $n$ steps and the $i$-th step involves $n-i$ maximum
flow computations.
Therefore, the worst-case running time of their algorithm
is $O(n^2f(n))$.
In our case, BAL computes the speed of critical jobs through
binary search. Each iteration of the binary search
involves a  maximum flow computation.
Once the critical speed is computed, the set of critical
jobs can be found by computing a minimum-cut.
BAL constructs the schedule once all the critical speeds are determined.

\section{Makespan Minimization with a Budget of Energy}

Algorithm BAL can be extended to obtain an optimal algorithm, say MBAL, for the problem of makespan minimization given a fixed budget of energy $E$. As before, preemption and migration are allowed and jobs have arbitrary release dates and
works. In order to apply MBAL, we will need an upper and a lower bound on the makespan of the optimal schedule. Then, the algorithm uses binary search to compute the minimum makespan for which there is a feasible schedule consuming $E$ units of energy. Two such bounds are $X_{LB}=\frac{1}{m}(\frac{W^{\alpha}}{E})^{\frac{1}{\alpha-1}}$ and
$X_{UB}=\max_i\{r_i\}+ (\frac{W^{\alpha}}{E})^{\frac{1}{\alpha-1}}$  where $W$ is the total work
of all jobs. The high-level description of the algorithm is the following:
\begin{algorithm}[h!]
\caption{MBAL}
\label{alg2}
\begin{algorithmic}[1]
\STATE Compute $X_{UB}$ and $X_{LB}$.
\STATE Perform binary search in $[X_{LB},X_{UB}]$ to find the minimum makespan $X^*$ for which there is a feasible schedule that consumes an $E$ amount of energy.
\STATE Return this schedule.
\end{algorithmic}
\end{algorithm}

In order to perform the binary search, given a value $X$, MBAL examines whether or not there is a feasible schedule of makespan $X$ that consumes $E$ units of energy. To do this, it runs algorithm BAL assuming that all jobs have a common deadline $X$. Then, it computes the minimum value of energy $E^*$ that a feasible schedule for the particular instance might have. If $E\geq E^*$, then there is a feasible schedule that executes the jobs using no more than $E$ energy with makespan $X$. Otherwise, there does not exist such a schedule.
The complexity of MBAL is $\log P$ times the complexity of BAL, i.e. $O(nf(n)\log^2 P)$.

\section{Conclusion}

We studied the energy minimization
multiprocessor speed scaling problem with migration.
We proposed a combinatorial polynomial time algorithm
based on a reduction  to the maximum flow problem.
We also extended our result in the case where the
objective is makespan minimization given a budget of
energy.
Since there is not much work on problems with migration there are many directions and problems to be considered for multicriteria optimization. All these problems seem to be very interesting and might require new algorithmic techniques because of their continuous nature. In this context, we believe that the approach used in our paper may be useful for future works.
\newline
\newline
\textbf{Acknowledgments}
\newline
We thank Alexander Kononov for helpful discussions on this work.

\newpage

\end{document}